\begin{document}
\title{Morse frames\thanks{This version of the paper corrects a minor error in the algorithm presented in Section~6 of the published previous version of this paper. The correction does not affect the main results.}}
\author{Gilles Bertrand\orcidID{0009-0004-7294-7081} \\\and Laurent Najman\orcidID{0000-0002-6190-0235}}%
\authorrunning{G. Bertrand and L. Najman}%
\institute{Univ Gustave Eiffel, CNRS, LIGM, F-77454 Marne-la-Vallée, France %
\email{\{gilles.bertrand,laurent.najman\}@esiee.fr}%
}
\maketitle              %

\newcommand{\bbbu}{\; \ddot{\cup} \;}
\newcommand{\axr}[1]{\ddot{\textsc{#1}}  \normalsize}

\newcommand{\axcup}{\textsc{C\tiny{UP}} }
\newcommand{\axcap}{\textsc{C\tiny{AP}} }
\newcommand{\axunion}{\textsc{U\tiny{NION}} }
\newcommand{\axinter}{\textsc{I\tiny{NTER}} }

\newcommand{\bb}[1]{\mathbb{#1}}
\newcommand{\ca}[1]{\mathcal{#1}}
\newcommand{\ax}[1]{\textsc{#1} \normalsize}

\newcommand{\axb}[2]{\ddot{\textsc{#1}} \textsc{\tiny{#2}}  \normalsize}
\newcommand{\bbb}[1]{\ddot{\mathbb{#1}}}
\newcommand{\cab}[1]{\ddot{\mathcal{#1}}}

\newcommand{\rel}[1]{\scriptstyle{\mathbf{#1}}}

\newcommand{\rela}[1]{\textsc{\scriptsize{\bf{{#1}}}} \normalsize}

\newcommand{\de}[2]{#1[#2]}
\newcommand{\di}[2]{#1\langle #2 \rangle}

\newcommand{\la}{\langle}
\newcommand{\ra}{\rangle}
\newcommand{\hs}{\hspace*{\fill}}

\newcommand{\cell}{\mathbb{C}}
\newcommand{\cellp}{\mathbb{C}^\times}
\newcommand{\simp}{\mathbb{S}}
\newcommand{\comp}{\mathbb{H}}
\newcommand{\simpp}{\mathbb{S}^\times}
\newcommand{\den}{\mathbb{D}\mathrm{en}}
\newcommand{\ram}{\mathbb{R}\mathrm{am}}
\newcommand{\tree}{\mathbb{T}\mathrm{ree}}
\newcommand{\graph}{\mathbb{G}\mathrm{raph}}
\newcommand{\vertex}{\mathbb{V}\mathrm{ert}}
\newcommand{\edge}{\mathbb{E}\mathrm{dge}}
\newcommand{\equ}{\mathbb{E}\mathrm{qu}}
\newcommand{\esub}{\mathbb{E}\mathrm{sub}}

\newcommand{\topp}{\langle \mathrm{K} \rangle}
\newcommand{\topq}{\langle \mathrm{Q} \rangle}
\newcommand{\topxp}{\langle \mathbb{X}, \mathrm{P} \rangle}
\newcommand{\topxq}{\langle \mathbb{X},\mathrm{Q} \rangle}

\newcommand{\vt}{\mathcal{K}}
\newcommand{\vtp}{\mathcal{T'}}
\newcommand{\vtpp}{\mathcal{T''}}
\newcommand{\vq}{\mathcal{Q}}
\newcommand{\vqp}{\mathcal{Q'}}
\newcommand{\vqpp}{\mathcal{Q''}}
\newcommand{\vk}{\mathcal{K}}
\newcommand{\vkp}{\mathcal{K'}}
\newcommand{\vkpp}{\mathcal{K''}}

\newcommand{\C}{\ensuremath{\searrow^{\!\!\!\!\!C}}}
\newcommand{\Detach}{\ensuremath{\;\oslash\;}}

\newcommand{\sig}{\sigma}
\newcommand{\del}{W}
\newcommand{\ms}{\overrightarrow{W}}
\newcommand{\msi}{\overrightarrow{W_{i}}}
\newcommand{\msim}{\overrightarrow{W_{i-1}}}
\newcommand{\mss}{\widehat{W}}
\newcommand{\msc}{\ddot{W}}

\newcommand{\torusms}{\overrightarrow{T}}
\newcommand{\torusmso}{\overleftarrow{T}}

\begin{abstract}
In the context of discrete Morse theory, we introduce Morse frames, which are maps that associate a set of critical simplexes to each simplex of a given complex. The main example of Morse frames are the Morse references. In particular,  Morse references allow computing Morse complexes, an important tool for homology. We highlight the link between Morse references and gradient flows. We also propose a novel presentation of the annotation algorithm for persistent cohomology, as a variant of a Morse frame. Finally, we propose another construction, that takes advantage of the Morse reference for computing the Betti numbers of a complex in mod 2 arithmetic.
\keywords{Homology \and Cohomology \and Discrete Morse Theory.}
\end{abstract}

\section{Introduction}

In this paper, we aim at developing new concepts and algorithm schemes for computing topological invariants for simplicial complexes, such as cycles, cocycles and Betti numbers (Sec.~\ref{sec:basic}).
In \cite{Bertrand2023MorseSequences}, one of the authors of the present paper, introduces a novel, sequential, presentation of discrete Morse theory \cite{For98}, termed {\em Morse sequences}. In section~\ref{sec:frame}, we introduce \emph{Morse frames}, that are maps that associate a set of critical simplexes to each simplex. These maps allow adding information to Morse sequences, so that we can compute cycles and cocycles that detect “holes”.
The main example of Morse frames is called the Morse reference (Sec.~\ref{sec:reference}), and is a by-product of Morse sequences. We discuss the link between reference maps and gradient flows. This leads us to the Morse complex.
We then see (Sec.~\ref{sec:annotations}) that Morse frames allows for a novel presentation of annotations~\cite{Dey14} for computing persistent cohomology. Then, inspired by the annotation technique, we propose (Sec.~\ref{sec:mergingReferenceAnnotations}) an efficient construction for computing Betti numbers in mod 2 arithmetic.
We then discuss (Sec.~\ref{sec:implementing}) how to implement the notions presented in the paper. Finally, we conclude the paper. 

\section{Simplicial complexes, homology, and cohomology}
\label{sec:basic}
\subsection{Simplicial complexes}
Let $K$ be a finite family composed
of non-empty finite sets.
The family $K$ is a {\it (simplicial) complex} if $\sigma \in K$ whenever $\sigma \not= \emptyset$ and $\sigma \subseteq \tau$
for some $\tau \in K$. 

An element of a simplicial complex $K$ is {\it a simplex of $K$}, or {\it a face of $K$}.
A {\em facet of $K$} is a simplex of $K$ that is maximal for inclusion.
The {\it dimension} of $\sigma \in K$, written $dim(\sigma)$,
is the number of its elements
minus one. If $dim(\sigma) =p$, we say that $\sigma$ is a \emph{$p$-simplex}.
We denote by $K^{(p)}$ the set %
of all $p$-simplexes of~$K$.

We recall the definitions of the collapses/expansions operators~\cite{Whi39}.

Let $K,L$ be simplicial complexes. Let $\sig \in K^{(p)}$, $\tau \in K^{(p+1)}$.
The couple $(\sig,\tau)$ is a {\em free pair for $K$}, or a {\em free $p$-pair for $K$},
if $\tau$ is the only face of $K$ that contains $\sig$.
Thus, $\tau$ is necessarily a facet of $K$.
If $(\sig,\tau)$ is a free ($p$-)pair for $K$, then $L = K \setminus (\sig,\tau)$
is {\em an elementary ($p$-)collapse of $K$}, and $K$ is {\em an elementary ($p$-)expansion of $L$}.
We say that
$K$ {\em collapses onto $L$},
or that $L$ {\em expands onto $K$},
if there exists a sequence
$\langle K=K_0,\ldots,K_k=L \rangle$, such that
$K_i$ is an elementary collapse of $K_{i-1}$, $i \in [1,k]$.

\subsection{Homology and cohomology}
\label{sec:hom}
Let $K$ be a simplicial complex. We write
$K[p]$ for the set composed of all subsets of $K^{(p)}$.
Also, we set $K^{(-1)} = \emptyset$ and $K[-1] = \{\emptyset\}$.
Each element of $K[p]$, $p \geq -1$, is a \emph{$p$-chain of $K$}.
The symmetric difference of two elements of $K[p]$ endows $K[p]$
with the structure of a vector space over the field $\mathbb{Z}_2 = \{ 0, 1 \}$. The set $K^{(p)}$
is a basis for this vector space.
Within this structure, a chain $c \in K[p]$ is written as a sum
$\sum_{\sigma \in c} \sigma$, the chain $c = \emptyset$
being written $0$. The sum of two chains is obtained using the modulo 2 arithmetic.

Let $K$ be a simplicial complex. As we are dealing with a finite simplicial complex, boundary and coboundary operators can be defined as operators on $K[p]$. 
 If $\sigma \in K^{(p)}$, with $p \geq 0$, we set: 

\noindent
\hspace*{\fill}
$\partial(\sigma) = \{\tau \in K^{(p-1)} \; | \; \tau \subset \sigma \}$ and 
$\delta(\sigma) = \{\tau \in  K^{(p+1)}\; | \; \sigma \subset \tau \}$.
\hspace*{\fill} 

\noindent
The \emph{boundary operator} $\partial_p : K[p] \rightarrow K[p-1]$, $p \geq 0$, is such that, for
each $c \in K[p]$, 
$\partial_p(c) = \sum_{\sigma \in c} \partial(\sigma)$, with $\partial_p(\emptyset) = 0$.

\noindent
The \emph{coboundary operator} $\delta^p : K[p] \rightarrow K[p+1]$, $p \geq -1$, is such that, for
each $c \in K[p]$, 
$\delta^p(c) = \sum_{\sigma \in c} \delta(\sigma)$, with $\delta^p(\emptyset) = 0$. 

\noindent
For each $p \geq 0$, we have
$ \partial_{p} \circ \partial_{p+1} = 0$ and $\delta^{p} \circ \delta^{p-1} = 0$. \\
We define four subsets of $K[p]$, $p \geq 0$, which are vector spaces over $\bb{Z}_2$: 
\begin{itemize}[noitemsep,topsep=0pt]
\item the set $Z_p(K)$ of \emph{$p$-cycles of $K$}, $Z_p(K)$ is the kernel of $\partial_p$; 
\item the set $B_p(K)$ of \emph{$p$-boundaries of $K$}, $B_p(K)$ is the image of $\partial_{p+1}$; 
\item the set $Z^p(K)$ of \emph{$p$-cocycles of $K$}, $Z^p(K)$ is the kernel of $\delta^p$; 
\item the set $B^p(K)$ of \emph{$p$-coboundaries of $K$}, $B^p(K)$ is the image of $\delta^{p-1}$. 
\end{itemize}

\def\figsize1{0.22}
\begin{figure*}[tb]
    \centering
        \begin{subfigure}[t]{\figsize1\textwidth}
        \centering
        \includegraphics[width=.99\textwidth]{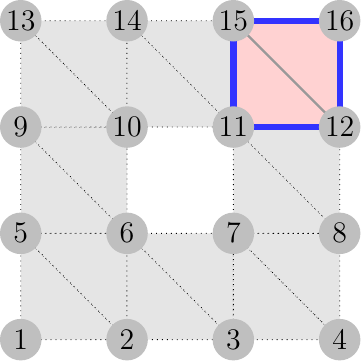}
        \caption{%
        }
    \end{subfigure}%
    ~
    \begin{subfigure}[t]{\figsize1\textwidth}
        \centering
        \includegraphics[width=.99\textwidth]{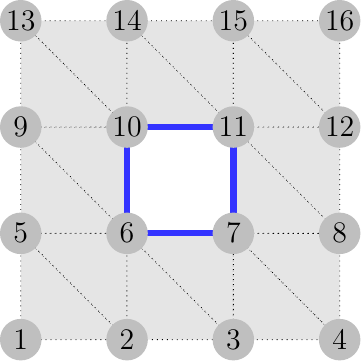}
        \caption{%
        }
    \end{subfigure}%
    ~
    \begin{subfigure}[t]{\figsize1\textwidth}
        \centering
        \includegraphics[width=.99\textwidth]{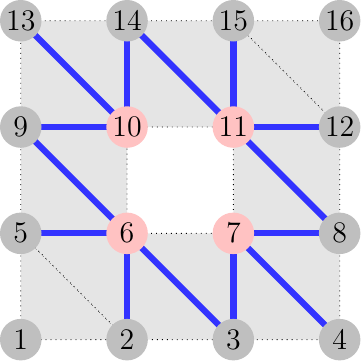}
        \caption{%
        }
    \end{subfigure}%
    ~ 
    \begin{subfigure}[t]{\figsize1\textwidth}
        \centering
        \includegraphics[width=.99\textwidth]{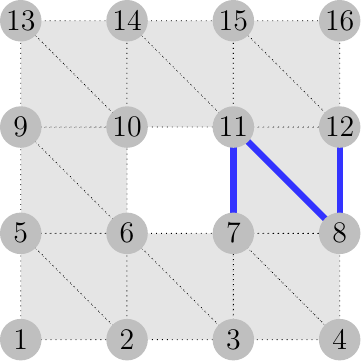}
        \caption{%
        }
    \end{subfigure}
    \caption{\label{fig:annulus} An annulus, with various cycles and cocyles. See text for details. %
}
\end{figure*}

Fig.~\ref{fig:annulus} depicts an annulus, with various cycles and cocyles, coloured in blue. In Fig.~\ref{fig:annulus}.a, we see a 1-cycle that is the 1-boundary of the two pink triangles.  In Fig.~\ref{fig:annulus}.b, we have a 1-cycle that is not a 1-boundary. Such a cycle detects a “hole” by “contouring” it.
In Fig.~\ref{fig:annulus}.c, we see
 a 1-cocycle which is the 1-coboundary of the four pink points. In Fig.~\ref{fig:annulus}.d, we have a 1-cocycle that is not a 1-coboundary. Such a cocycle detects a “hole” by “cutting” the annulus.

\noindent
We also define the following quotient vector spaces: 
\begin{itemize}[noitemsep,topsep=0pt]
\item $H_p(K) = Z_p(K) \setminus B_p(K)$, which is the \emph{$p^\textnormal{th}$ homology vector space of $K$}; 
\item $H^p(K) = Z^p(K) \setminus B^p(K)$, which is the \emph{$p^\textnormal{th}$ cohomology vector space of $K$}. 
\end{itemize}

\noindent
An element $h$ in $H_p(K)$  is such that $h = z + B_p(K)$ for some $z \in Z_p(K)$. We
write $h = [ z ]_p$, which  is the \emph{homology class of the cycle z}. 

\noindent
Similarly, an element $h$ in $H^p(K)$  is such that $h = z + B^p(K)$ for some $z \in Z^p(K)$. We
write $h = [ z ]^p$, which is the \emph{cohomology class of the cocycle z}.

Let $\beta_p(K) = dim(H_p(K))$ and $\beta^p(K) = dim(H^p(K))$.
We have $\beta_p(K) = \beta^p(K)$
(See \cite[Sec. V.1]{Edel01}).
The number $\beta_p(K) = \beta^p(K)$ is the \emph{$p^\textnormal{th}$ Betti number (mod 2) of $K$}.

\section{Morse sequences and Morse frames}
\label{sec:frame}
Let us first introduce the two following basic operators \cite{Whi39}.

Let $K,L$ be simplicial complexes.
If $\sigma \in K$ is a facet of $K$, and if $L = K \setminus \{\sigma \}$, we say that
$L$ is {\em an elementary perforation of $K$}, and that
$K$ is {\em an elementary filling of $L$}. 

The notion of a ``Morse sequence'' \cite{Bertrand2023MorseSequences} is defined by simply considering expansions and fillings of a simplicial complex.

\begin{definition} \label{def:seq1}
Let $K$ be a simplicial complex. A \emph{Morse sequence (on $K$)} is a sequence
$\ms = \langle \emptyset = K_0,\ldots,K_k =K \rangle$ of simplicial complexes such that,
for each $i \in [1,k]$, $K_i$ is either an elementary expansion or an elementary filling of $K_{i-1}$.
\end{definition}

Let $\ms = \langle K_0,\ldots,K_k \rangle$ be a Morse sequence.
For each $i \in [1,k]$: 
\begin{itemize}[noitemsep,topsep=0pt]
\item If $K_i$ is an elementary filling of $K_{i-1}$, we write $\hat{\sigma}_i$ for the simplex $\sig$ such that $K_i = K_{i-1} \cup \{\sig\}$. We say that
the face $\sig$ is \emph{critical for $\ms$}. 
\item If $K_i$ is an elementary expansion of $K_{i-1}$, we write
$\hat{\sigma}_i$  for the free pair $(\sigma,\tau)$ such that $K_i = K_{i-1} \cup \{\sigma,\tau \}$. We say that
$\hat{\sigma}_i$, $\sig$, $\tau$, are \emph{regular for $\ms$}. 
\end{itemize}

\noindent
We write $\mss = \langle \hat{\sigma}_1,\ldots, \hat{\sigma}_k\rangle$, and we say that
$\mss$ is a \emph{(simplex-wise) Morse sequence}.
Clearly, $\ms$ and $\mss$ are two equivalent forms.
We shall pass from one of these forms to the other without
notice.

There are several ways to obtain a Morse sequence $\ms$ from a given complex~$K$.
The two following schemes are basic ones to achieve this goal: 
\begin{enumerate}[noitemsep,topsep=0pt]
\item \emph{The increasing scheme}. We build $\ms$ from the left to the right. Starting from~$\emptyset$, we obtain $K$ by iterative expansions and fillings.
We say that this scheme is \emph{maximal} if
we make a filling only if no expansion can be made. 
\item \emph{The decreasing scheme}. We build $\ms$ from the right to the left. Starting from $K$, we obtain $\emptyset$ by iterative collapses and perforations.
We say that this scheme is \emph{maximal} if we make a perforation only if no collapse can be made. 
\end{enumerate}

\noindent
See \cite[Section 7]{Bertrand2023MorseSequences} for a discussion of the differences between these schemes.

\begin{definition} \label{def:seq2}
The \emph{gradient vector field of a Morse sequence 
$\ms$} is the set of all regular pairs for $\ms$. 
We say that two Morse sequences $\ms$ and $\overrightarrow{V}$ on 
a given complex $K$ are \emph{equivalent} if 
they have the same gradient vector field. 
\end{definition}

It is worth mentioning that there is no loss of generality when using Morse sequences 
as a presentation of gradient vector fields. In fact, we can prove that
the gradient vector field of an arbitrary Morse function may be seen as the gradient vector
field of a Morse sequence (see \cite{Bertrand2023MorseSequences}).

Let $\ms$ be a Morse sequence on $K$.
We write $\msc = \{\sigma \in  K \; | \; \sigma$ is critical for $\ms \}$,
$\msc^{(p)} = \{\sigma \in  K^{(p)} \; | \; \sigma$ is critical for $\ms \}$, and $\msc^{(-1)} = \emptyset$. 

\noindent
For each $p \geq -1$,
we write $\msc[p]$ for the set composed of all subsets of $\msc^{(p)}$. An element $c \in \msc[p]$
is a \emph{$p$-chain of $\msc$}. We have $\msc[p] \subseteq K[p]$. 

A Morse frame is simply a map which assigns, to each $p$-simplex of $K$, a certain set of critical $p$-simplexes.

\begin{definition}
    Let $\ms$ be a Morse sequence on a simplicial complex $K$.
We say that $\Upsilon$ is a \emph{(Morse) frame on $\ms$}
if $\Upsilon$ is a map such that: 

\noindent
\hspace*{\fill}
$\Upsilon:$ $\sigma \in K^{(p)} \mapsto  \Upsilon(\sigma) \in \msc[p]$.
\hspace*{\fill} 

\noindent
If $\Upsilon$ is a Morse frame on $\ms$, we also denote by $\Upsilon$ the map:

\noindent
\hspace*{\fill}
$\Upsilon:$ $c \in K[p] \mapsto  \Upsilon(c) \in \msc[p]$, where $\Upsilon(c) = \sum_{\sigma \in c} \Upsilon(\sigma)$
and $\Upsilon(\emptyset) = 0$.
\hspace*{\fill} 
\end{definition}

\section{The Morse reference}
\label{sec:reference}
A Morse complex is a basic tool for efficiently computing simplicial homology
using discrete Morse theory. Since a Morse complex is built solely on critical complexes, its dimension 
is generally much smaller than the one of the original complex. 
In this section, we introduce two frames  which allow simplifying the construction of a Morse complex.

\subsection{Reference and co-reference}
\begin{definition} \label{def:reference}
Let $\ms$ be a Morse sequence and 
let $\Upsilon',\Upsilon''$ be two Morse frames on $\ms$
such that, for each critical simplex $\sig$ of $\ms$,
we have $\Upsilon'(\sig) = \Upsilon''(\sig) = \{ \sigma \}$. 

\noindent
We say that $\Upsilon'$ is the \emph{(Morse) reference of $\ms$} if,
for each regular pair $(\sigma,\tau)$ of~$\ms$, we have $\Upsilon'(\tau) = 0$
and $ \Upsilon'(\sig) = \Upsilon'(\partial(\tau) \setminus \{\sig \})$. 

\noindent
We say that $\Upsilon''$ is the \emph{(Morse) co-reference of~$\ms$}
if,
for each regular pair $(\sigma,\tau)$ of~$\ms$, we have 
$\Upsilon''(\sig) = 0$ and 
$ \Upsilon''(\tau) = \Upsilon''(\delta(\sig) \setminus \{\tau \})$. 
\noindent
If $\Upsilon$ is the reference of $\ms$, we write $\Upsilon^*$ for the co-reference of $\ms$.
\end{definition}
Thus, if $\Upsilon$ is the Morse reference of~$\ms$ then,
for each regular pair $(\sigma, \tau)$ of~$\ms$, 
we have $ \Upsilon(\partial(\tau)) = 0$
and $ \Upsilon^*(\delta(\sigma)) = 0$.

Let $\ms$ be a Morse sequence on $K$ and let
$\mss = \langle \hat{\sigma}_1,\ldots, \hat{\sigma}_k\rangle$. 
We see that a Morse reference $\Upsilon$ of $\ms$ may be computed by scanning the sequence $\mss$
from the left to the right. Also, a Morse co-reference $\Upsilon^*$ of $\ms$ may be computed by scanning $\mss$ 
from the right to the left. The uniqueness of $\Upsilon$ and $\Upsilon^*$ is a consequence of these constructions. 
As a limit case, observe that:
\begin{itemize}[noitemsep,topsep=0pt]
\item If $\tau$ is a facet of $K$, then we have $\Upsilon(\tau) = 0$ whenever $\tau$ is not critical.
\item If $\sig$ is a 0-simplex of $K$, then we have $\Upsilon^*(\sig) = 0$ whenever $\sig$ is not critical.
\end{itemize}
\noindent
Also, it can be checked that the
references and co-references of two Morse sequences
are equal whenever these sequences are equivalent in the sense given in Definition \ref{def:seq2}.
The converse is, in general, not true.

\begin{figure*}[tb]
    \centering
    \begin{subfigure}[t]{0.32\textwidth}
        \centering
        \includegraphics[height=.9\textwidth]{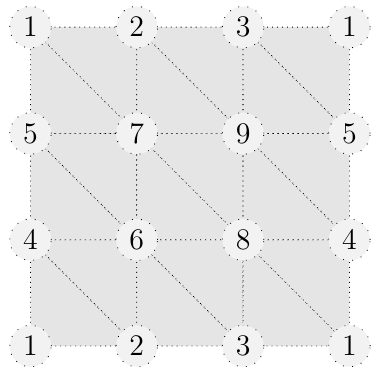}
        \caption{}
    \end{subfigure}%
    ~
    \begin{subfigure}[t]{0.32\textwidth}
        \centering
        \includegraphics[height=.9\textwidth]{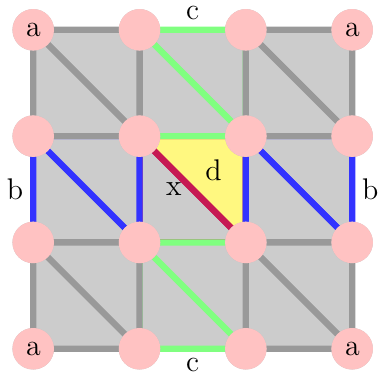}
        \caption{}
    \end{subfigure}%
     ~
    \begin{subfigure}[t]{0.32\textwidth}
        \centering
        \includegraphics[height=.9\textwidth]{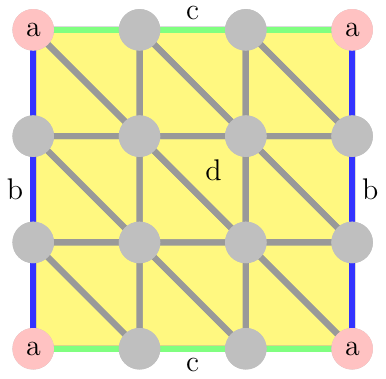}
        \caption{}                                          
    \end{subfigure}
 \caption{(a) A torus. Points with the same label are identified. (b) A Morse reference map. (c) A Morse co-reference map. See text for details.
 }
 \label{fig:MorseSequenceTorus}
\end{figure*}

Fig.~\ref{fig:MorseSequenceTorus}.a depicts a two-dimensional torus.  We first illustrate, in Fig.~\ref{fig:MorseSequenceTorus}.b, the Morse reference of a Morse sequence $\torusms$ on this torus obtained by a maximal increasing scheme (depicted in detail in~\cite[Fig.~1]{Bertrand2023MorseSequences}). In this figure, any simplex $\sigma$ in grey is such that $\Upsilon(\sigma)=0$. 
At the first step, the first critical simplex $\sigma_1=a$ is coloured in pink, with $\Upsilon(\sigma_1)=a$. After all the possible expansions from $a$, we have $\Upsilon(\sigma)=a$ (in pink) for all $\sigma$ of dimension $0$.
At the next stage, we introduce a first 1-critical simplex $b$ (in blue), and we have $\Upsilon(b)=b$; this leads to $\Upsilon(\sigma)=b$ for all simplexes $\sigma$ of dimension $1$ highlighted in blue. We then introduce a second critical 1-simplex, $c$ (in green), and we have $\Upsilon(c)=c$; this leads to $\Upsilon(\sigma)=c$ for all simplexes $\sigma$ of dimension $1$ in green. At the penultimate step of the Morse sequence, we have a free pair $(\sigma, \tau)$, and $\Upsilon(\sigma)=x$, in purple, where $x=\Upsilon(\partial(\tau)\setminus \{\sigma\})= b+c$.
The ultimate step of the Morse sequence is the critical 2-simplex $d$, and we have $\Upsilon(d)=d$, highlighted in yellow.
 
Conversely, in Fig~\ref{fig:MorseSequenceTorus}.c, by scanning $\torusms$ from right to left, we obtain its Morse co-reference map. In this figure, any simplex $\sigma$ in grey is such that $\Upsilon^*(\sigma)=0$. Starting with the critical 2-simplex $d$, after several steps in the sequence, we have $\Upsilon^*(\sigma)=d$ for all simplexes $\sigma$ of dimension 2, highlighted in yellow.
We then have $\Upsilon^*(\sigma)=c$ for all simplexes $\sigma$ of dimension 1 highlighted in green, and $\Upsilon^*(\sigma)=b$ for all simplexes $\sigma$ of dimension 1 highlighted in blue. Finally, we have $\Upsilon^*(\sigma_1)=a$ for the last simplex of dimension 0, which is critical, and is coloured in pink.

\subsection{Gradient paths, co-gradient paths and gradient flows}
References are closely related to the notion of a gradient path. 
In the following, we recall the classical definition of such a path. We also introduce the notion 
of a co-gradient path, which arises naturally from the definition of a co-reference. 

Let $\ms$ be a Morse sequence on $K$. 
 \begin{enumerate}[noitemsep,topsep=0pt]
\item Let $\pi = \langle \sigma_0, \tau_0, \ldots,\sigma_{k-1}, \tau_{k-1}, \sig_k \rangle$, $k \geq 0$,
be a sequence with $\sigma_i \in K^{(p)}$, $\tau_i \in K^{(p+1)}$.
We say that $\pi$ is a \emph{gradient path in $\ms$ (from $\sig_0$ to $\sig_k$)} if, for any $i \in [0,k-1]$,
the pair $(\sigma_i,\tau_{i})$ is regular for $\ms$ and
$\sigma_{i+1} \in \partial(\tau_{i})$, with $\sigma_{i+1} \not= \sig_{i}$.
The path $\pi$ is {\em trivial} if $k=0$, that is, if 
$\pi = \langle \sigma_0 \rangle$ with $\sigma_0 \in K^{(p)}$.
\item Let $\pi = \langle \tau_0, \sig_1,\tau_1, \ldots,\sig_{k}, \tau_{k} \rangle$, $k \geq 0$, 
be a sequence with $\tau_i \in K^{(p)}$, $\sigma_i \in K^{(p-1)}$.
We say that $\pi$ is a \emph{co-gradient path in $\ms$ (from $\tau_0$ to $\tau_k$)} if, for any $i \in [1,k]$,
the pair $(\sigma_i,\tau_{i})$ is regular for $\ms$ and
$\tau_{i-1} \in \delta(\sig_{i})$, with $\tau_{i} \not= \tau_{i-1}$. 
The path $\pi$ is {\em trivial} if $k=0$, that is, if 
$\pi = \langle \tau_0 \rangle$ with $\tau_0 \in K^{(p)}$.
\end{enumerate}

Prop.~\ref{pro:grad1} below can be proved by induction, by considering the two scanning processes of $\mss$
that are mentioned above. Theorem \ref{pro:grad11} reflects an important duality relation between the reference and the co-reference of a Morse sequence. It can be proved by changing the extremities of gradient and co-gradient paths. 

 \begin{proposition} \label{pro:grad1} Let $\ms$ be a Morse sequence on $K$
 and $\Upsilon$ be the reference of $\ms$.
 Let $\sig, \nu \in K^{(p)}$ such that $\nu$ is critical for $\ms$. 
  \begin{enumerate}[noitemsep,topsep=0pt]
 \item  We have $\nu \in \Upsilon(\sigma)$ if and only if the number of gradient paths from the simplex~$\sigma$ to the critical simplex $\nu$
is odd. 
 \item  We have $\nu \in \Upsilon^*(\sigma)$ if and only if the number of co-gradient paths from the critical simplex $\nu$ to the simplex $\sig$
is odd. 
\end{enumerate}
 \end{proposition}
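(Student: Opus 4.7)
The plan is to prove both parts by induction on the construction of the Morse sequence $\mss = \langle \hat{\sigma}_1,\ldots,\hat{\sigma}_k\rangle$. Part~(1) follows the left-to-right scan that defines $\Upsilon$, and part~(2) follows the right-to-left scan that defines $\Upsilon^*$. In each case, I will show that at every step the claim holds for the simplex(es) whose frame value has just been assigned.

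For part~(1), the base case is a filling step that introduces a critical simplex $\sigma$: then $\Upsilon(\sigma) = \{\sigma\}$ by definition, while the only gradient path starting at $\sigma$ is the trivial one $\langle \sigma \rangle$, which reaches only $\sigma$ itself. For the inductive step, consider an expansion step adding a free pair $(\sigma, \tau)$. The value $\Upsilon(\tau) = 0$ is consistent because $\tau$, as the upper face of its unique regular pair, begins no non-trivial gradient path, while $\tau$ itself is not critical. For $\sigma$, every non-trivial gradient path from $\sigma$ to a critical $\nu$ must begin with the unique regular pair containing $\sigma$, namely $(\sigma,\tau)$, and then continue as a gradient path from some $\sigma' \in \partial(\tau) \setminus \{\sigma\}$ to $\nu$. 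Hence the number of such paths equals the sum, over these $\sigma'$, of the numbers of gradient paths from $\sigma'$ to $\nu$; reducing modulo~$2$ and applying the inductive hypothesis to each $\sigma'$ recovers exactly the formula $\Upsilon(\sigma) = \Upsilon(\partial(\tau) \setminus \{\sigma\})$.

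Part~(2) is entirely dual. Running the scan right to left, the base case at a critical simplex is immediate from $\Upsilon^*(\sigma) = \{\sigma\}$ and the trivial co-gradient path. At a collapse step for a free pair $(\sigma, \tau)$, the value $\Upsilon^*(\sigma) = 0$ matches the absence of non-trivial co-gradient paths ending at $\sigma$, since $\sigma$ is the smaller face of its unique regular pair and so cannot play the role of $\tau_k$ in such a path. For $\tau$, every non-trivial co-gradient path from a critical $\nu$ to $\tau$ must end with the regular pair $(\sigma, \tau)$, so its penultimate simplex lies in $\delta(\sigma) \setminus \{\tau\}$; summing over those simplexes and applying the inductive hypothesis yields $\Upsilon^*(\tau) = \Upsilon^*(\delta(\sigma) \setminus \{\tau\})$.

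The one point to verify before running these inductions---and the only real subtlety---is that the simplexes appearing on the right-hand sides of the two recursions have \emph{already} received their frame values at the current step of the corresponding scan. For part~(1), this follows from the fact that when $(\sigma, \tau)$ becomes free at step $i$, the $p$-faces of $\tau$ other than $\sigma$ already lie in $K_{i-1}$ because $K_{i-1} \cup \{\sigma,\tau\}$ is a simplicial complex in which $\sigma$ is not a face of any of them. For part~(2), the dual fact is that when $(\sigma,\tau)$ is free in $K_i$, every other coface of $\sigma$ in $K$ lies outside $K_i$ and hence was processed earlier in the reverse scan. Once these placements are in hand, the mod-$2$ counting closes the induction without further difficulty.
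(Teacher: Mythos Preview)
Your proof is correct and follows exactly the approach the paper itself indicates: the paper does not give a detailed argument but states that Proposition~\ref{pro:grad1} ``can be proved by induction, by considering the two scanning processes of $\mss$.'' Your left-to-right induction for $\Upsilon$ and right-to-left induction for $\Upsilon^*$, together with the verification that the faces appearing in the recursion have already been processed, is precisely the argument being alluded to, carried out in full.
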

 
  \begin{theorem} \label{pro:grad11} Let $\ms$ be a Morse sequence on $K$
 and $\Upsilon$ be the reference of $\ms$.
 Let $\sig \in K^{(p)}$ and $\tau \in K^{(p+1)}$
 be two simplexes that are both 
 critical for $\ms$.\\
 We have $\sig \in \Upsilon(\partial(\tau))$ if and only if $\tau \in \Upsilon^*(\delta(\sig))$.
 \end{theorem}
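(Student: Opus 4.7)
The plan is to use Proposition~\ref{pro:grad1} to translate each side into a parity of a count of paths, and then to match up the two families of paths by extending each path with one extra simplex at an extremity. Under this extension, both types of path become alternating sequences of the same shape, so the bijection is given by keeping the sequence fixed and re-interpreting it; the only delicate point will be that the gradient and co-gradient parsings impose formally different non-repetition conditions which have to be reconciled using the uniqueness of the pairing in the gradient vector field.

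By Proposition~\ref{pro:grad1}(1) and the additivity of $\Upsilon$, the condition $\sig \in \Upsilon(\partial(\tau))$ is equivalent to
\[
N_1 \;=\; \sum_{\sigma_0 \in \partial(\tau)} \#\{\text{gradient paths in } \ms \text{ from } \sigma_0 \text{ to } \sig\}
\]
being odd; symmetrically, by Proposition~\ref{pro:grad1}(2), $\tau \in \Upsilon^*(\delta(\sig))$ is equivalent to
\[
N_2 \;=\; \sum_{\tau_k \in \delta(\sig)} \#\{\text{co-gradient paths in } \ms \text{ from } \tau \text{ to } \tau_k\}
\]
being odd. Given a gradient path $\langle \sigma_0, \tau_0, \ldots, \sigma_k = \sig \rangle$ with $\sigma_0 \in \partial(\tau)$ counted by $N_1$, prepend $\tau$; given a co-gradient path $\langle \tau, \sig_1, \tau_1, \ldots, \sig_k, \tau_k \rangle$ with $\sig \in \partial(\tau_k)$ counted by $N_2$, append $\sig$. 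In both cases one obtains a sequence $s_1, s_2, \ldots, s_{2k+2}$ alternating between dimensions $p+1$ and $p$, with $s_1 = \tau$ and $s_{2k+2} = \sig$. Both parsings impose the same descent conditions ($s_{2i} \in \partial(s_{2i-1})$ for $i = 1, \ldots, k+1$) and the same regular-pair conditions ($(s_{2i}, s_{2i+1})$ regular for $i = 1, \ldots, k$).

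The main step is reconciling the non-repetition conditions. The gradient reading requires $s_{2i+2} \neq s_{2i}$ for $i = 1, \ldots, k$, whereas the co-gradient reading requires $s_{2i+1} \neq s_{2i-1}$ for $i = 1, \ldots, k$. The boundary gradient condition at $i = k$ is automatic because $s_{2k+2} = \sig$ is critical while $s_{2k}$ is regular, and symmetrically the boundary co-gradient condition at $i = 1$ is automatic because $s_1 = \tau$ is critical while $s_3$ is regular. For the remaining middle indices, the gradient vector field of $\ms$ is a perfect matching between its regular $p$-simplexes and its regular $(p+1)$-simplexes; hence $s_{2i} = s_{2i+2}$ if and only if $s_{2i+1} = s_{2i+3}$. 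This identifies the gradient non-repetition at index $i \in \{1,\ldots,k-1\}$ with the co-gradient non-repetition at index $i+1$, so the two families of extended paths coincide. It follows that $N_1 = N_2$, and therefore $\sig \in \Upsilon(\partial(\tau))$ if and only if $\tau \in \Upsilon^*(\delta(\sig))$.
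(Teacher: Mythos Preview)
Your proof is correct and follows exactly the approach the paper indicates (the paper gives only the one-line hint ``It can be proved by changing the extremities of gradient and co-gradient paths'' without spelling out the details). Your treatment of the non-repetition conditions---reducing the boundary cases via criticality of $\sigma$ and $\tau$, and matching the interior conditions at index $i$ with those at index $i+1$ via uniqueness of the gradient pairing---is the right way to make that hint precise.
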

 
An important concept in discrete Morse theory is the one of gradient flows~\cite{For98}, by which, using Forman's own words~\cite{forman2002discretecohomology}, {\em loosely speaking, a simplex flows along the gradient paths} for infinite time (see also \cite{forman2002discretecohomology} for the dual concept). See \cite[Def. 8.6]{Sco19} for a precise definition. Gradient flows are a basic ingredient for setting the fundamental property of a Morse complex, that is, the equality of homology between a complex and its Morse complex.
 In fact, there is a deep link between co-references  of a Morse sequence and gradient flows. For the sake of space, 
 we give only an informal presentation of this relation, which may be checked by the interested readers. 
 If $\tau$ is a $p$-simplex of a complex $K$, the gradient flow which starts
 from $\tau$ is obtained: 
 \begin{enumerate}[noitemsep,topsep=0pt]
     \item By considering regular pairs $(\sig',\tau')$, with $\sig' \in \partial(\tau)$. Such a pair may be seen as the beginning of a co-gradient path that starts at $\tau$.
    \item By considering  some $p$-simplexes that are in the boundary of a $(p+1)$-simplex~$\nu$, such that $(\tau, \nu)$ is regular. %
 \end{enumerate}
If $\tau$ is a critical simplex, then the case 2. cannot happen. Also, this case cannot happen for $\tau'$, since 
$\tau'$ belongs to the regular pair $(\sig',\tau')$. By induction, 
the gradient flow starting at a critical simplex corresponds exactly to co-gradient paths. In a dual manner, 
the gradient flow ending at a critical simplex corresponds to gradient paths.
Thus, 
if $\sig, \tau \in K^{(p)}$ and if $\tau$ is critical for $\ms$, then: \\
\hspace*{\fill}
\emph{The simplex $\sig$ is in the gradient flow starting at $\tau$ if and only if 
$\tau \in \Upsilon^* (\sig)$. }
\hspace*{\fill} \\
\hspace*{\fill}
\emph{The simplex $\sig$ is in the gradient flow ending at $\tau$ if and only if 
$\tau \in \Upsilon (\sig)$. }
\hspace*{\fill}

\smallskip
It is interesting to compare $\Upsilon$ and $\Upsilon^*$ with the analogous constructions in smooth Morse theory.
A gradient flow associates a critical simplex to a chain which is invariant under the flow. According to \cite{forman2002discretecohomology}, this chain is the discrete analogue of the  {\em unstable (or descending) cell associated to a critical point of a smooth Morse function}, and it is obtained with $\Upsilon^*$. Forman \cite{forman2002discretecohomology} also studies the dual of the flow, the coflow. The coflow maps a critical simplex to a chain that is invariant under the coflow. This chain plays {\em the role of the stable (or ascending) cell associated to a critical point of a smooth Morse function}, and it is obtained thanks to $\Upsilon$.

\subsection{The Morse complex}
Now, let us consider a boundary map that is restricted to the critical simplexes. This map may be easily
built with a Morse reference.  

Let $\Upsilon$ be the reference of $\ms$.
If $\sigma \in \msc^{(p)}$, we set $d(\sigma) = \Upsilon(\partial(\sigma))$. \\
We denote by $d_p$ the map: \\
\hspace*{\fill}
$d_p:$ $c \in  \msc[p] \mapsto  d_p(c) \in \msc[p-1]$, where $d_p(c) = \Upsilon(\partial_p(c))$.
\hspace*{\fill} \\
Thus, we have 
$d_p(c) = \sum_{\sigma \in c} d(\sigma)$
with $d_p(\emptyset) = 0$.

 \begin{theorem} \label{pro:label3}
  Let $\Upsilon$ be the reference of a Morse sequence $\ms$ on $K$. \\
 For each $c \in K[p]$, we have
  $d_p (\Upsilon (c)) = \Upsilon (\partial_p (c))$.
 \end{theorem}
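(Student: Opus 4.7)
The plan rests on reducing the statement to a single-simplex identity via linearity, then inducting on position in the Morse sequence. Since $\Upsilon$, $\partial_p$, and $d_p$ are all defined additively over $\mathbb{Z}_2$, it suffices to establish $d_p(\Upsilon(\sigma)) = \Upsilon(\partial(\sigma))$ for each $\sigma \in K^{(p)}$. I would then induct on the index $i$ such that $\sigma$ appears in $\hat{\sigma}_i$, and split into three cases according to the role of $\sigma$ at that step: critical, upper simplex of a regular pair, or lower simplex of a regular pair.

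The first two cases fall out from the definitions. If $\sigma$ is critical, then $\Upsilon(\sigma) = \{\sigma\}$, so $d_p(\Upsilon(\sigma)) = d(\sigma) = \Upsilon(\partial(\sigma))$ immediately. If $\sigma$ is the upper simplex of a regular pair, then $\Upsilon(\sigma) = 0$, and the consequence of Definition~\ref{def:reference} recorded just below it ($\Upsilon(\partial(\tau)) = 0$ for each regular $(\sigma,\tau)$) gives $\Upsilon(\partial(\sigma)) = 0$, so both sides vanish.

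The essential case is when $\sigma$ is the lower simplex of a regular pair $(\sigma,\tau)$, so that by definition $\Upsilon(\sigma) = \Upsilon(\partial(\tau) \setminus \{\sigma\})$. The key point is that every face $\rho$ in $\partial(\tau) \setminus \{\sigma\}$ already belongs to $K_{i-1}$: the free-pair condition at step $i$ forces all proper subfaces of $\tau$ other than $\sigma$ to have been introduced strictly earlier. Hence the inductive hypothesis applies to each such $\rho$, yielding
$d_p(\Upsilon(\sigma)) = \sum_\rho d_p(\Upsilon(\rho)) = \sum_\rho \Upsilon(\partial(\rho)) = \Upsilon\!\left(\partial_p(\partial(\tau) \setminus \{\sigma\})\right).$
Over $\mathbb{Z}_2$ one has $\partial(\tau) \setminus \{\sigma\} = \partial_{p+1}(\tau) + \sigma$, and $\partial_p \circ \partial_{p+1} = 0$ then collapses the inner chain to $\partial(\sigma)$, closing the induction. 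The base case is trivial, as the only candidate simplex at step $1$ is a critical $0$-simplex whose boundary is empty. The one point that needs care—and thus the main obstacle—is the justification that every face in $\partial(\tau) \setminus \{\sigma\}$ is available to the inductive hypothesis; this traces back directly to the structural characterization of free pairs used throughout Morse sequence construction, and is the hinge that lets $\partial^2 = 0$ finish the argument.
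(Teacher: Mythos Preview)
Your proposal is correct and follows essentially the same route as the paper: an induction along the Morse sequence, splitting into the critical/upper-regular/lower-regular cases, with the lower-regular case handled via $\partial(\tau)\setminus\{\sigma\}\subseteq K_{i-1}$ together with $\partial_p\circ\partial_{p+1}=0$. The only cosmetic difference is that you first reduce to single simplexes by linearity and then induct, whereas the paper inducts directly on arbitrary chains $c\in K_i[p]$; the underlying computations are identical.
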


\begin{proof}
Let $\ms = \langle \emptyset = K_0,...,K_k =K \rangle$ be a Morse sequence on $K$, and let $\mss = \langle \hat{\sigma}_1,..., \hat{\sigma}_k\rangle$.
We consider the statement $(S_i)$: For each $c \in K_i[p]$,
 we have $d_p (\Upsilon (c)) = \Upsilon (\partial_p (c))$.
 We have $K_0[p] = \{\emptyset \}$. Thus $(S_0)$ holds. \\
 Suppose $(S_{i-1})$ holds with $0 \leq i-1 \leq k-1$.
 Let $c \in K_i[p]$. \\
1) Suppose $\hat{\sigma}_i = \sigma$, with $\sigma \in \msc$. If $\sigma \not\in c$, then we are done.
Otherwise, we have $c = c' \cup \{ \sigma \}$, with $c' \in K_{i-1}[p]$. \\
We have $\partial_p(c) = \partial_p(c') + \partial(\sigma)$.
Thus $\Upsilon (\partial_p(c)) = \Upsilon (\partial_p(c')) + \Upsilon (\partial(\sigma))$. \\
By the induction hypothesis and by the definition of $d(\sigma)$, we obtain
$\Upsilon (\partial_p(c)) = d_p (\Upsilon (c'))  + d(\sigma)$.
Therefore $\Upsilon (\partial_p(c)) = d_p (\Upsilon (c')) + d_p(\Upsilon (\{ \sigma \} )) = d_p (\Upsilon (c))$. \\
2) Suppose $\hat{\sigma}_i = (\sigma,\tau)$ is a free pair.
If $\sigma \not\in c$ and $\tau \not\in c$, then we are done. \\
2.1) Suppose $\sigma \in c$. Let $c' = c + \partial_{p+1}(\tau)$.
We have $\Upsilon (c') = \Upsilon (c) + \Upsilon (\partial_{p+1}(\tau)) = \Upsilon (c)$. We also have
$\partial_p(c') = \partial_p(c) + \partial_p(\partial_{p+1}(\tau)) = \partial_p(c)$.  \\
But
$c' = (c \setminus \{\sigma\}) + c''$,
with  $c'' = \{\eta \in \partial_{p+1}(\tau) \; | \; \eta \not= \sigma \}$. Thus $c' \in K_{i-1}[p]$. \\
By the induction hypothesis, it follows that
$d_p (\Upsilon (c')) = \Upsilon (\partial_p (c'))$. By the previous equalities, we obtain
$d_p (\Upsilon (c)) = \Upsilon (\partial_p (c))$.\\
2.2) Suppose $\tau \in c$. Let $c = c' \cup \{ \tau \}$, with $c' \in K_{i-1}[p]$. Since $\Upsilon (\tau) =0$, we obtain
$\Upsilon (c) = \Upsilon (c')$. Furthermore $\Upsilon (\partial_p (c)) = \Upsilon (\partial_p (c')) + \Upsilon (\partial (\tau))
= \Upsilon (\partial_p (c'))$.
By the induction hypothesis, we have $d_p (\Upsilon (c')) = \Upsilon (\partial_p (c'))$.
Therefore $d_p (\Upsilon (c)) = \Upsilon (\partial_p (c))$. \qed
\end{proof}

The two following results are direct consequences of Theorem \ref{pro:label3}.

 \begin{proposition} \label{pro:label1}
  Let $\Upsilon$ be the reference of a Morse sequence $\ms$ on $K$.
  For any $c,c' \in K[p]$ we have
$\Upsilon(\partial_p(c)) = \Upsilon(\partial_p(c'))$ whenever $\Upsilon (c) = \Upsilon (c')$.
 \end{proposition}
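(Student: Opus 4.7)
The plan is immediate from Theorem \ref{pro:label3}: the proposition simply says that $\Upsilon \circ \partial_p$ factors through $\Upsilon$, and the factorization map is precisely $d_p$ provided by Theorem \ref{pro:label3}.

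Concretely, I would assume $c, c' \in K[p]$ with $\Upsilon(c) = \Upsilon(c')$. Applying the map $d_p : \msc[p] \to \msc[p-1]$ to both sides of this equality (which is legitimate since $\Upsilon(c), \Upsilon(c') \in \msc[p]$ by the definition of a Morse frame), I obtain $d_p(\Upsilon(c)) = d_p(\Upsilon(c'))$. Then I would invoke Theorem \ref{pro:label3} twice to rewrite each side: the left-hand side equals $\Upsilon(\partial_p(c))$, and the right-hand side equals $\Upsilon(\partial_p(c'))$. Chaining the three equalities gives $\Upsilon(\partial_p(c)) = \Upsilon(\partial_p(c'))$, as desired.

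There is no real obstacle here; the only thing to check is that $d_p$ is genuinely a well-defined function on $\msc[p]$ (so that applying it to both sides of an equality of chains in $\msc[p]$ is valid), but this is built into the definition $d_p(c) = \sum_{\sigma \in c} d(\sigma)$ given just before Theorem \ref{pro:label3}. Thus the proof is a one-line consequence of Theorem \ref{pro:label3}, and writing it out would take only two or three lines.
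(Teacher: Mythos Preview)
Your proposal is correct and matches the paper's proof essentially line for line: assume $\Upsilon(c)=\Upsilon(c')$, apply $d_p$ to both sides, and invoke Theorem~\ref{pro:label3} to rewrite each side as $\Upsilon(\partial_p(\cdot))$.
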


\begin{proof}
Let $c,c' \in K[p]$ with $\Upsilon (c) = \Upsilon (c')$.Thus, $d_p (\Upsilon (c)) = d_p (\Upsilon (c'))$.
By Theorem \ref{pro:label3}, we have $\Upsilon (\partial_p (c)) = \Upsilon (\partial_p (c'))$. \qed
\end{proof}

 \begin{proposition} \label{pro:label4}
If $\ms$ is a Morse sequence, then the maps $d_p$ are boundary operators. That is,
we have $d_p \circ d_{p+1} = 0$.
\end{proposition}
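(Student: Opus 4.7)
The plan is to obtain this as an immediate consequence of Theorem \ref{pro:label3}, using only the fact that $\partial_p \circ \partial_{p+1} = 0$ on arbitrary chains of $K$.

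First I would fix an arbitrary chain $c \in \msc[p+1]$ of critical $(p+1)$-simplexes and compute $d_p(d_{p+1}(c))$ by expanding the definitions. By definition, $d_{p+1}(c) = \Upsilon(\partial_{p+1}(c))$, which lies in $\msc[p]$. Applying $d_p$ then gives $d_p(d_{p+1}(c)) = d_p\bigl(\Upsilon(\partial_{p+1}(c))\bigr)$. Now $\partial_{p+1}(c)$ is a $p$-chain of $K$, so Theorem \ref{pro:label3} applies with this chain in the role of the generic argument, yielding $d_p\bigl(\Upsilon(\partial_{p+1}(c))\bigr) = \Upsilon\bigl(\partial_p(\partial_{p+1}(c))\bigr)$.

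To conclude, I would invoke the basic identity $\partial_p \circ \partial_{p+1} = 0$ recalled in Section \ref{sec:hom}, together with $\Upsilon(\emptyset) = 0$, to collapse the right-hand side to $0$. Since $c \in \msc[p+1]$ was arbitrary, this proves $d_p \circ d_{p+1} = 0$.

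There is essentially no obstacle here: Theorem \ref{pro:label3} is precisely the intertwining statement $d \circ \Upsilon = \Upsilon \circ \partial$ needed to transport the $\partial \partial = 0$ identity from $K$ to $\msc$. The only subtlety worth noting explicitly is that the codomain of $d_{p+1}$ is $\msc[p]$, so one should check that the composition $d_p \circ d_{p+1}$ is well-defined; this is immediate because $\Upsilon$ takes values in $\msc[p]$ by definition of a Morse frame.
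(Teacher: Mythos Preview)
Your proof is correct and follows essentially the same route as the paper: both apply Theorem~\ref{pro:label3} to the chain $\partial_{p+1}(c)$ and then use $\partial_p\circ\partial_{p+1}=0$. The only cosmetic difference is that the paper argues on a single critical simplex and extends by linearity, whereas you work directly with an arbitrary $c\in\msc[p+1]$.
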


\begin{proof}
Let $\sigma \in \msc^{(p+1)}$.
We have $d_{p+1} (\{ \sigma \}) = d (\sigma) = \Upsilon (\partial (\sigma)) = \Upsilon (\partial_{p+1} (\{ \sigma \}))$. \\
By Theorem \ref{pro:label3}, we have
$d_p(\Upsilon (\partial_{p+1} (\{ \sigma \}))) = \Upsilon (\partial_p ( \partial_{p+1} (\{ \sigma \}))) = \Upsilon (0) = 0$. \\
Thus $d_p \circ d_{p+1} ( \{ \sigma \}) = 0$, which gives the result by linearity. \qed
\end{proof}

Since  $d_p \circ d_{p+1} = 0$, the couple $(\msc[p], d_p)$ satisfies the definition of a \emph{chain complex} \cite{Hat01}.
We say that $(\msc[p], d_p)$ is the \emph{Morse (chain) complex of $\ms$}.
This notion of a Morse complex
is equivalent to the classical one given in the context of discrete Morse theory.
This fact may be verified using \cite[Theorem 8.31]{Sco19}, Proposition \ref{pro:grad1}, and the very definition of the differential $d_p$.

Dual results for Th.~\ref{pro:label3}, Prop.~\ref{pro:label1} and Prop.~\ref{pro:label4} can be written by considering $\Upsilon^*$ instead of $\Upsilon$.
 
In the following, we denote by $H_p(\msc)$ (resp. $H^p(\msc)$) the $p^\textnormal{th}$ homology (resp. cohomology) vector space corresponding to the Morse complex of $\ms$.
By Theorem \ref{pro:label3}, the map $\Upsilon$ is a \emph{chain map} \cite{Hat01} from the chain complex $(K[p], \partial_p)$
to  the chain complex $(\msc[p], d_p)$. Hence, $\Upsilon$ induces a linear map between $H_p(K)$ and $H_p(\msc)$; see \cite{Hat01}.
Furthermore, we have the following.

\begin{theorem} [from \cite{For98}] \label{pro:label5}
For each $p \geq 0$, the vector spaces $H_p(K)$ and $H_p(\msc)$ are isomorphic.
 \end{theorem}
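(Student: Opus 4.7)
The plan is to prove the isomorphism by induction on the length of the Morse sequence $\ms = \la K_0, \ldots, K_k \ra$. For each $i \in [0,k]$, let $\msc_i$ denote the critical simplexes among $\hat{\sigma}_1, \ldots, \hat{\sigma}_i$, let $\Upsilon_i$ be the reference of the prefix Morse sequence $\la K_0, \ldots, K_i \ra$, and let $(\msc_i[p], d_p^i)$ be its Morse complex. The inductive claim is: for every $p \geq 0$, the chain map $\Upsilon_i : (K_i[p], \partial_p) \to (\msc_i[p], d_p^i)$ guaranteed by Theorem \ref{pro:label3} induces an isomorphism $H_p(K_i) \to H_p(\msc_i)$. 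The base case $i = 0$ is trivial, since both complexes are empty. A key preliminary observation is that the left-to-right scanning construction of the reference implies $\Upsilon_i|_{K_{i-1}[p]} = \Upsilon_{i-1}$, so the inductive map extends the previous one.

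For the inductive step, there are two cases. If $\hat{\sigma}_i = (\sigma,\tau)$ is a free pair (expansion), then $\msc_i = \msc_{i-1}$, so the Morse chain complex is unchanged, while on the simplicial side $K_i$ elementarily collapses to $K_{i-1}$. Elementary collapses are simple homotopy equivalences and preserve homology, and the inclusion $K_{i-1} \hookrightarrow K_i$ induces this isomorphism. The compatibility $\Upsilon_i(\tau) = 0$ and $\Upsilon_i(\sigma) = \Upsilon_{i-1}(\partial(\tau) \setminus \{\sigma\})$ guarantees that for any cycle $z \in Z_p(K_i)$, $z$ is homologous in $K_i$ to a cycle in $K_{i-1}$ with the same image under $\Upsilon_i$, which reduces to the inductive hypothesis.

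The substantive case is a filling $\hat{\sigma}_i = \sigma$, a critical $p$-simplex. Since $\partial(\sigma) \in Z_{p-1}(K_{i-1})$, we split: (a) If $\partial(\sigma) = \partial_p(c)$ for some $c \in K_{i-1}[p]$, then $z := \{\sigma\} + c \in Z_p(K_i)$ generates a new $p$-homology class, while $H_{p-1}$ is unchanged. By Theorem \ref{pro:label3}, $d(\sigma) = \Upsilon(\partial(\sigma)) = \Upsilon(\partial_p(c)) = d_p^{i-1}(\Upsilon_{i-1}(c))$, so $d(\sigma)$ is a boundary in the Morse complex and $\sigma + \Upsilon_{i-1}(c) = \Upsilon_i(z)$ represents a new class in $H_p(\msc_i)$. (b) If $\partial(\sigma)$ is not a boundary in $K_{i-1}$, then adding $\sigma$ kills the class $[\partial(\sigma)]_{p-1}$ and creates no new $p$-cycle through $\sigma$; by the inductive hypothesis, $[d(\sigma)]_{p-1} = \Upsilon_{i-1}([\partial(\sigma)]_{p-1})$ is nonzero in $H_{p-1}(\msc_{i-1})$, so it is killed in $H_{p-1}(\msc_i)$. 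In both subcases a dimension count shows the Betti numbers change identically on the two sides, and $\Upsilon_i$ matches representatives.

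The main obstacle is the bookkeeping in case (a): one must verify that the new class $[\Upsilon_i(z)]_p \in H_p(\msc_i)$ is genuinely independent of the previous classes (not just nonzero in $Z_p$), and that injectivity of the induced map is preserved, i.e.\ no cycle in $K_{i-1}$ previously representing a nonzero class becomes mapped to a boundary after the addition of $\sigma$. This last point follows because any new boundary in $\msc_i[p-1]$ that is not in $B_{p-1}(\msc_{i-1})$ must involve $d(\sigma)$, which in case (a) is already a boundary in $\msc_{i-1}$, so no previously nonzero class is killed.
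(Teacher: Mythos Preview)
The paper does not prove this theorem: it is attributed to Forman \cite{For98}, and in the Conclusion the authors explicitly list ``providing a proof of Th.~\ref{pro:label5}, that will rely only on the Morse reference'' as future work. So there is no proof in the paper to compare against; your proposal is in fact an attempt at the very argument the authors say they intend to write.

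Your inductive strategy along the Morse sequence is sound and is exactly in the spirit the paper is aiming for. The structure is correct: Theorem~\ref{pro:label3} gives you that $\Upsilon_i$ is a chain map, the left-to-right computation of the reference ensures $\Upsilon_i$ extends $\Upsilon_{i-1}$ and that $d_p^i$ restricted to $\msc_{i-1}$ equals $d_p^{i-1}$, the expansion case reduces to the homotopy invariance of homology under elementary collapse together with $\Upsilon(\partial(\tau))=0$, and the filling case splits correctly according to whether $\partial(\sigma)$ is already a boundary.

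The place that needs tightening is the one you flag yourself. In case~(a) you must check not only $H_p$ but also that $H_{p-1}$ is unchanged on the Morse side; this follows because $d(\sigma)=d_p^{i-1}(\Upsilon_{i-1}(c))$ is already a boundary in $\msc_{i-1}$, so $B_{p-1}(\msc_i)=B_{p-1}(\msc_{i-1})$. In case~(b) you should also verify explicitly that $H_p$ is unchanged on both sides: any putative new $p$-cycle $\sigma+z'$ forces $\partial(\sigma)=\partial_p(z')$, contradicting~(b), and likewise on the Morse side $\sigma+c'$ being a cycle forces $d(\sigma)=d_p^{i-1}(c')$, contradicting the inductive hypothesis. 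For injectivity of the induced map on $H_{p-1}$ in case~(b), if $\Upsilon_i(w)=d_p^i(\epsilon\sigma+c'')$ then $\Upsilon_{i-1}(w+\epsilon\,\partial(\sigma))=d_p^{i-1}(c'')$, whence by the inductive hypothesis $w+\epsilon\,\partial(\sigma)\in B_{p-1}(K_{i-1})$ and so $w\in B_{p-1}(K_i)$. With these points made explicit, the argument goes through.

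Forman's original proof proceeds differently, building an explicit chain homotopy equivalence via the gradient flow (the stabilisation $\Phi^\infty$); your approach trades that global construction for a local, step-by-step comparison, which is more elementary and fits the incremental philosophy of Morse sequences.
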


\section{Annotations}
\label{sec:annotations}
If $\sig$ is a $p$-simplex in a complex $K$,
an annotation for $\sig$, as introduced in  \cite{Busa12}, is a length $g$ binary vector, where $g$ is the rank of the homology group $H_p(K)$.
These annotations, when summed up for simplexes in a given cycle, provide a way to
determine the homology class of this cycle.
The following definition is an adaptation for Morse sequences of this notion.
The main difference is that we annotate each simplex with a subset of the critical simplexes of the sequence, instead of
a vector.

Let $\ms$ be a Morse sequence on $K$.
We say that a Morse frame $\Upsilon$ on $\ms$ is an  \emph{annotation on $\ms$} if $\Upsilon$ satisfies the three conditions: 
\begin{itemize}[noitemsep,topsep=0pt]
    \item [C1:] For each $\sig \in K^{(p)}$, we have $\Upsilon(\sig) \subseteq \ddot{V}^{(p)}$ where $\ddot{V}^{(p)}$ is a subset of $\msc^{(p)}$;
    \item[C2:] For each $p$, we have $Card(\ddot{V}^{(p)}) = \beta_p(K)$; 
    \item [C3:] For any cycles $z,z' \in Z_p(K)$, we have $\Upsilon(z) = \Upsilon(z')$ if and only if their homology classes are such that
$[z]_p = [z']_p$.
\end{itemize}

Let $\Upsilon$ be a frame on $\ms$. If $\tau \in \ddot{W}^{(p)}$,
we set $\Upsilon^\sharp(\tau) = \{\sigma \in K^{(p)} \; | \; \tau \in \Upsilon(\sigma) \}$. 
\noindent
The following proposition, derived from \cite{Dey14}, indicates that an annotation may be seen as a way
to determine a cohomology basis of the complex.

 \begin{proposition} [adapted from \cite{Dey14}] \label{pro:ann1}
Let $\ms$ be a Morse sequence on $K$.
A  frame $\Upsilon$ on $\ms$ is an annotation on $\ms$ if and only if $\Upsilon$ satisfies the conditions C1, C2, and the
following condition C4. 
\begin{itemize}[noitemsep,topsep=0pt]
\item [C4:] The set of chains $\{\Upsilon^\sharp(\tau) \; | \; \tau \in \ddot{V}^{(p)} \}$ is a set of cocycles whose cohomology
classes $\{[\Upsilon^\sharp(\tau)]^p \; | \; \tau \in \ddot{V}^{(p)} \}$ constitute a basis of $H^p(K)$.
\end{itemize}
\end{proposition}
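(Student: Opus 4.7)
The plan is to translate both conditions into the natural mod 2 pairing $\langle c, \phi\rangle$ between a $p$-chain $c$ and a $p$-cochain $\phi$, namely the parity of $|c \cap \phi|$. From the defining equivalence $\sigma \in \Upsilon^\sharp(\tau) \iff \tau \in \Upsilon(\sigma)$ and the linearity of $\Upsilon$, one gets, for every chain $c \in K[p]$ and every $\tau \in \ddot{V}^{(p)}$,
\[
\tau \in \Upsilon(c) \iff \langle c, \Upsilon^\sharp(\tau)\rangle = 1,
\]
so $\Upsilon(c)$ is determined coordinate-wise by the cochains $\Upsilon^\sharp(\tau)$. This identity is the engine of both directions.

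For the forward direction (C3 $\Rightarrow$ C4, assuming C1, C2), I would first show that each $\Upsilon^\sharp(\tau)$ is a cocycle. For every $(p+1)$-chain $c$, the boundary $\partial_{p+1}(c)$ lies in $B_p(K) \subseteq Z_p(K)$ and has homology class $[0]_p$, so C3 forces $\Upsilon(\partial_{p+1}(c)) = \Upsilon(0) = 0$; by the pairing identity, $\langle \partial_{p+1}(c), \Upsilon^\sharp(\tau)\rangle = 0$, i.e.\ $\delta^p(\Upsilon^\sharp(\tau)) = 0$. Linearity of $\Upsilon$ together with C3 then shows that the induced map $\overline\Upsilon \colon H_p(K) \to \bb{Z}_2^{\ddot{V}^{(p)}}$, $[z]_p \mapsto \Upsilon(z)$, is injective, hence bijective by C2. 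To prove linear independence of the classes $[\Upsilon^\sharp(\tau)]^p$, suppose $\sum_{\tau \in S} \Upsilon^\sharp(\tau)$ is a coboundary for some $S \subseteq \ddot{V}^{(p)}$; then for every cycle $z$, $\sum_{\tau \in S} \langle z, \Upsilon^\sharp(\tau)\rangle = 0$, i.e.\ $|\Upsilon(z) \cap S|$ is even. Surjectivity of $\overline\Upsilon$ lets any singleton $\{\tau_0\}$ be realised as $\Upsilon(z)$, forcing $S = \emptyset$. Since $|\ddot{V}^{(p)}| = \beta_p(K) = \beta^p(K) = \dim H^p(K)$, the classes form a basis.

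For the reverse direction (C4 $\Rightarrow$ C3, assuming C1, C2), if $[z]_p = [z']_p$ then $w = z + z'$ is a boundary, so $\langle w, \Upsilon^\sharp(\tau)\rangle = 0$ for every cocycle $\Upsilon^\sharp(\tau)$ provided by C4; the pairing identity together with C1 then yields $\Upsilon(w) = 0$, i.e.\ $\Upsilon(z) = \Upsilon(z')$. Conversely, if $\Upsilon(z) = \Upsilon(z')$ for cycles $z, z'$, then $\Upsilon(w) = 0$ with $w = z + z'$, so $w$ pairs trivially with every $\Upsilon^\sharp(\tau)$. Since by C4 the classes $[\Upsilon^\sharp(\tau)]^p$ span $H^p(K)$, the homology class $[w]_p$ pairs trivially with every class in $H^p(K)$; non-degeneracy of the evaluation pairing $H_p(K) \times H^p(K) \to \bb{Z}_2$ then forces $[w]_p = 0$, i.e.\ $[z]_p = [z']_p$.

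The main obstacle is the invocation of the non-degeneracy of the evaluation pairing between $H_p(K)$ and $H^p(K)$ over $\bb{Z}_2$. This is the universal-coefficient isomorphism $H^p(K) \cong \mathrm{Hom}(H_p(K), \bb{Z}_2)$, a standard fact over a field, consistent with the equality $\beta_p = \beta^p$ already recalled in Section~\ref{sec:hom}; I would record it explicitly before the dimension-counting steps, or, to stay elementary, deduce non-degeneracy directly from a finite-dimensional splitting of $K[p]$ that yields dual bases for $H_p$ and $H^p$. Once this pairing is in hand, both implications reduce to the single translation identity between $\Upsilon$ and $\Upsilon^\sharp$ and a dimension count.
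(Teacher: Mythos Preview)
The paper does not supply its own proof of this proposition: it is stated as ``adapted from \cite{Dey14}'' and left without argument, so there is no in-paper proof to compare against. Your argument is correct and is in fact the standard one underlying the cited result: the key translation identity $\tau\in\Upsilon(c)\iff\langle c,\Upsilon^\sharp(\tau)\rangle=1$ reduces both directions to the non-degeneracy of the Kronecker pairing $H_p(K)\times H^p(K)\to\bb{Z}_2$ together with the dimension equality $\beta_p(K)=\beta^p(K)$ already recorded in Section~\ref{sec:hom}.

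One small point of presentation. In the forward direction you show that each $\Upsilon^\sharp(\tau)$ is a cocycle by pairing with boundaries $\partial_{p+1}(c)$; it is cleaner to take $c$ to range over $(p{+}1)$-simplexes and use the adjunction $\langle\partial\eta,\phi\rangle=\langle\eta,\delta\phi\rangle$ explicitly, since that is what forces $\delta^p\Upsilon^\sharp(\tau)=0$. In the reverse direction, when you invoke C1 to conclude $\Upsilon(w)=0$ from the vanishing of all pairings, you are implicitly using that $\Upsilon(w)\subseteq\ddot V^{(p)}$ because each summand $\Upsilon(\sigma)$ lies there; this is fine but worth making explicit. Your closing paragraph on the universal-coefficient isomorphism over a field is exactly the right disclaimer: it is the one non-elementary input, and the paper's Section~\ref{sec:hom} already grants you the equality of Betti numbers needed for the dimension counts.
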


We give a construction for obtaining an annotation. Again, it is an adaptation for a  Morse sequence of the one given in
\cite{Silva11} and \cite{Dey14}.
Three cases are considered: 
\begin{enumerate}[noitemsep,topsep=0pt]
\item If a critical simplex is added, and if the annotation of the boundary of this simplex is trivial, then a new cycle is created.
The label associated to this simplex is composed solely of the simplex itself. 
\item If a critical simplex is added, and if the annotation of the boundary of this simplex is not trivial, then a cycle is removed.
This is done by selecting one label in the annotation of the boundary of this simplex, and by removing this label from all
the previous annotations. 
\item If a free pair is added, we propagate the labels of the annotations to this pair, according to the simple rule of Def. \ref{def:reference}. 
\end{enumerate}
See \cite{Silva11} and \cite{Dey14} for the validity of this construction for the cases 1 and 2 The validity for the case 3 is an easy consequence
of the definition of a free pair. 

Let $\overrightarrow{W} = \langle K_0,\ldots,K_k  \rangle$ be a Morse sequence and $\widehat{W} = \langle \hat{\sigma}_1,\ldots, \hat{\sigma}_k\rangle$.
 We write $\overrightarrow{W_i} = \langle K_0,\ldots,K_i \rangle$,
 $i \in [0,k]$.
 We consider the sequence $\langle \Upsilon_{0},\ldots,\Upsilon_{i} \rangle$, $i \in [0,k]$, such that $\Upsilon_{i}$ is a frame
 for  $\overrightarrow{W_i}$, with $\Upsilon_{0}(\emptyset) = 0$ and: 
 \begin{enumerate}[noitemsep,topsep=0pt]
\item If $\hat{\sigma}_i = \sig_i$ and $\Upsilon_{i-1}(\partial(\sig_i)) = 0$, then
$\Upsilon_i$ is such that $\Upsilon_i(\sig_i) = \sig_i$ and $\Upsilon_i(\tau) = \Upsilon_{i-1}(\tau)$ otherwise. 
\item  If $\hat{\sigma}_i = \sig_i$ and $\Upsilon_{i-1}(\partial(\sig_i)) \not= 0$, then we select
an arbitrary critical face $\nu \in \Upsilon_{i-1}(\partial(\sig_i))$.
The map $\Upsilon_i$ is such that $\Upsilon_i(\sig_i) = 0$,
$\Upsilon_i(\tau) = \Upsilon_{i-1}(\tau) + \Upsilon_{i-1}(\partial(\sig_i))$ if $\nu \in \Upsilon_{i-1}(\tau)$, and
$\Upsilon_i(\tau) = \Upsilon_{i-1}(\tau)$ otherwise. 
\item If $\hat{\sigma}_i = (\sig_i,\tau_i)$, then  $\Upsilon_i$ such that $\Upsilon_i(\tau_i) = 0$,
$\Upsilon_i(\sig_i) = \Upsilon_{i-1}(\partial(\tau_i) +  \sig_i )$,
and $\Upsilon_i(\tau) = \Upsilon_{i-1}(\tau)$ otherwise. 
\end{enumerate}

Under the above construction, each frame $\Upsilon_i$ is an annotation on $\overrightarrow{W_{i}}$. 

\noindent
Let $\ddot{V}_i^{(p)} = \{ \sig \in K_i \; | \; \Upsilon_i(\sig) =  \sig  \}$, $\ddot{V}_i^{(p)}$ is composed of critical faces for $\overrightarrow{W_{i}}$. 

\noindent
For each $\sig \in K_i^{(p)}$, we have $\Upsilon_i(\sig) \subseteq \ddot{V}_i^{(p)}$.
Furthermore, for each $p$, we have $Card(\ddot{V}_i^{(p)}) = \beta_p(K_i)$.

The interested reader can check that the reference map of the torus, given in Fig.~\ref{fig:MorseSequenceTorus}.b, is indeed an annotation.
Here, the above case 2 does not happen. This case corresponds to the cell $c$ from Figure \ref{fig:MorseSequenceDunceHat}.b.

\section{Computing Betti numbers with the Morse reference}
\label{sec:mergingReferenceAnnotations}

\noindent\textbf{Note.} \emph{This section presents a corrected version of the algorithm from the published version of this paper. The correction concerns the construction described below and does not affect the overall results or conclusions.}

In the construction described in Sec.~\ref{sec:annotations}, we have to remove a label from all previous annotations. We now present another construction that reduces the amount of operations required for
this task. The basic idea is to use the information given by the reference of a Morse sequence, and to remove labels only for some faces which are in the boundary of  critical simplexes. Thus, annotations are not computed for all simplexes, but this construction allows us to obtain the Betti numbers of the complex. 

Let $\ms$ be a Morse sequence on a simplicial complex $K$, and let $\Upsilon$ be a Morse frame on $\ms$. We say that  $\Upsilon$ is {\em perfect} if each Betti number $\beta_p(K)$ is exactly equal to the number of critical $p$-simplexes $\sig$ in $\ms$
such that $\Upsilon(\sig) = \{ \sig \}$.

A key observation is the following. The Morse reference $\Upsilon$ of $\ms$ is perfect if, and only if, for any critical simplex $\sigma$ in $\ms$, we have $\Upsilon(\partial(\sigma)) = 0$. In the next construction, we take advantage of this observation to iteratively remove suitable pairs of critical simplexes from the image of $\Upsilon$.

Let $\overrightarrow{W} = \langle \emptyset = K_0,...,K_k =K  \rangle$ be a Morse sequence on a complex $K$, and let $\Upsilon$ be the Morse reference of $\ms$.
We write $\widehat{W} = \langle \hat{\sigma}_1,..., \hat{\sigma}_k\rangle$. We set: \\
\hspace*{\fill}
$\ddot{W}^{+} = \ddot{W} \cup \{\tau \in K \; | \; \tau \in \partial(\sig)$ for some $\sig \in \ddot{W} \}$ 
\hspace*{\fill} \\
We consider the sequence of frames $\Upsilon_{0},\ldots,\Upsilon_{k}$ such that
$\Upsilon_{0} = \Upsilon$ and: 
  \begin{enumerate}[noitemsep,topsep=0pt]
\item If $\hat{\sigma}_i = \sig_i$ and $\Upsilon_{i-1}(\partial(\sig_i)) = 0$, then $\Upsilon_i = \Upsilon_{i-1}$. 
\item  If $\hat{\sigma}_i = \sig_i$ and $\Upsilon_{i-1}(\partial(\sig_i)) \not= 0$, then we select
an arbitrary critical simplex $\nu \in \Upsilon_{i-1}(\partial(\sig_i))$.
The map $\Upsilon_i$ is such that: 
\begin{itemize}
\item
If $\tau \in \ddot{W}^{+}$ and  $\sigma_i \in \Upsilon_{i-1}(\tau)$, then $\Upsilon_i(\tau) = \Upsilon_{i-1}(\tau) + \sig_i$,
\item  
If $\tau \in \ddot{W}^{+}$ and  $\nu \in \Upsilon_{i-1}(\tau)$, then
$\Upsilon_i(\tau) = \Upsilon_{i-1}(\tau) + \Upsilon_{i-1}(\partial(\sig_i))$,
\item
If $\tau \not\in \ddot{W}^{+}$, then $\Upsilon_i(\tau) = \Upsilon_{i-1}(\tau)$.
\end{itemize}
\item Otherwise, $\hat{\sigma}_i$ is a free pair, we set $\Upsilon_i = \Upsilon_{i-1}$. 
\end{enumerate}

\noindent
We then have the following: the Morse frame $\Upsilon_k$ is perfect.

It is easy to check that the Morse reference $\Upsilon$ of
the torus, given in Fig.~\ref{fig:MorseSequenceTorus}.b, is such that $\Upsilon(\partial(\sigma))=0$ for all critical simplexes $\sigma$. Thus, this Morse reference is perfect, and directly gives the expected Betti numbers (1,2,1) for the torus.

\begin{figure*}[tb]
    \centering
    \begin{subfigure}[t]{0.32\textwidth}
        \centering
        \includegraphics[width=.9\textwidth]{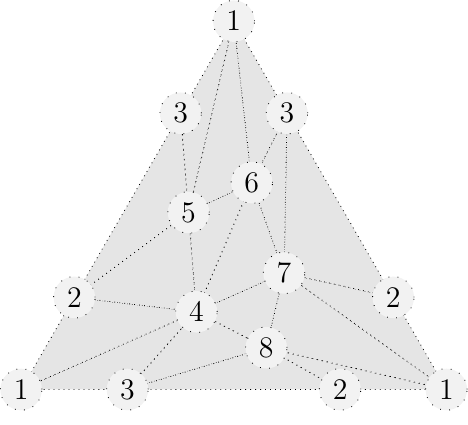}
        \caption{}
    \end{subfigure}%
    ~    
    \begin{subfigure}[t]{0.32\textwidth}
        \centering
        \includegraphics[width=.9\textwidth]{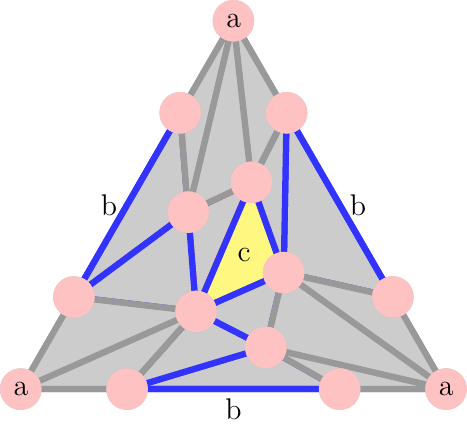}
        \caption{}
    \end{subfigure}%
     ~    
    \begin{subfigure}[t]{0.32\textwidth}
        \centering
        \includegraphics[width=.9\textwidth]{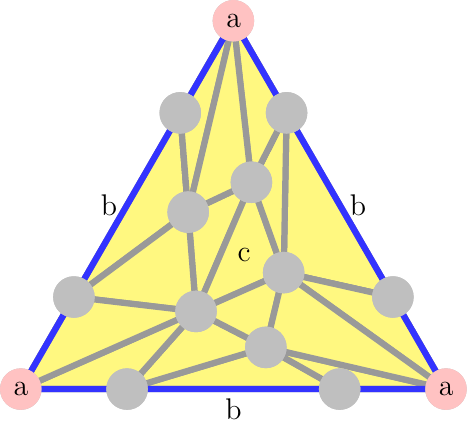}
        \caption{}
    \end{subfigure}
 \caption{(a) A dunce hat. Points with the same label are identified.  (b) A Morse reference map. (c) A Morse co-reference map. See text for details.}
 \label{fig:MorseSequenceDunceHat}
\end{figure*}
Now, let us consider the Morse reference $\Upsilon$
of the dunce hat that is depicted in Fig.~\ref{fig:MorseSequenceDunceHat}.b (see the corresponding Morse sequence in~\cite[Fig.~2]{Bertrand2023MorseSequences}). The last critical simplex $c$ in the sequence is such that $\Upsilon(\partial(c))=b+b+b=b$. 
Thus, there exist at this last step some edges annotated with $b$
in $\ddot{W}^{+}$.
We remove both $b$ and $c$ from the set of annotations of $\ddot{W}^{+}$, having for effect to “kill” the blue cocycle (or dually, to “kill” the blue cycle in Fig.~\ref{fig:MorseSequenceDunceHat}.c). We then retrieve the expected Betti numbers $(1,0,0)$ for the dunce hat. This result confirms that the dunce hat is acyclic, although the Morse sequence contains 3 critical simplexes, $a, b$ and~$c$. 

\section{Implementing Morse frames}
\label{sec:implementing}

In the literature, specific, independent algorithms are designed for computing the gradient vector field \cite{robins2011theory,benedetti2014random,harker2014discrete}, the Morse complex \cite{robins2011theory,fugacci2019computing}, or the Betti numbers~\cite{Dey14}. The framework of Morse frames shows that we can compute the gradient vector field and the Morse complex simultaneously, in only one pass, and the Betti numbers in two passes.  

As long as we can check whether a pair is free in constant time ({\em e.g.}, for example with cubical complexes and the use of a mask to check the neighborhood of a simplex), the complexity of a Morse sequence is $\mathcal{O}(dn)$, where $d$ is the dimension of the complex, and $n$  the number of its simplexes. When we compute a Morse reference map, we need to maintain a list of labels for each simplex, each label corresponding to a critical simplex. 
This leads to a complexity in $\mathcal{O}(dcn)$, where $c$ is the number of critical simplexes. The Morse reference  has a memory complexity of $\mathcal{O}(cn)$. In contrast, algorithms that compute a Morse complex, such as \cite{robins2011theory,fugacci2019computing}, claim a cubic worst-case  complexity for $d=3$ (because they have to run several times on each gradient path); furthermore, such algorithms can only be applied {\em after} obtaining a gradient vector field.

The framework of Morse frames allows retrieving the concept of annotations~\cite{Dey14}. The current implementation of annotations~\cite{boissonnat2015compressed} can be described, in our language,  as a Morse sequence where all simplexes are critical, {\em i.e.}, with only fillings. A key point for efficiency of this implementation~\cite{boissonnat2015compressed}, is the ordering of the simplexes: a heuristic is used to try preventing the creation of unnecessary cycles. Morse frames show that, with a simple change of heuristic (using, for example, a maximal increasing scheme), the annotation algorithm can take advantage of gradient fields. The ordering of the simplexes provided by such a scheme, avoids the creation of unnecessary cycles, by using expansions and fillings, instead of only fillings. %

Section~\ref{sec:mergingReferenceAnnotations} provides an algorithm for computing the Betti numbers in mod 2 arithmetic, that is inspired  by annotations. This algorithm uses the reference map and only considers the set of critical simplexes and their boundary.

\section{Conclusion}
This paper introduces Morse frames, that are based on a novel presentation of discrete Morse theory, called Morse sequences. Morse frames allow for adding information to a Morse sequence, associating a set of specific simplexes to each simplex. The main example of Morse frames, the Morse reference,
offers substantial utility in the context of homology. In particular, together with its dual, the Morse co-reference, they provide the discrete analogue of ascending/stable and descending/unstable cell associated to a critical point of a smooth Morse function. Significantly, the Morse reference allows retrieving the Morse chain complex. Using Morse frames, we give a novel presentation of the annotation algorithm. Inspired by these annotations, we describe an efficient scheme for computing Betti numbers in mod 2 arithmetic. 

On the theoretical side, for future work, we aim at providing a proof of Th.~\ref{pro:label5}, that will rely only on the Morse reference. We also intend to compute persistence with the Morse reference, and to extend our framework to other fields than the mod 2 arithmetic. On a more practical level, we also want to test the proposed algorithms, and to compare their efficiency with respect to the state-of-the-art. 

\newpage
\bibliographystyle{splncs04}
\bibliography{biblio}

\end{document}